\definecolor{keywordcolor}{rgb}{0.7, 0.1, 0.1}   
\definecolor{tacticcolor}{rgb}{0.0, 0.1, 0.6}    
\definecolor{commentcolor}{rgb}{0.4, 0.4, 0.4}   
\definecolor{symbolcolor}{rgb}{0.0, 0.1, 0.6}    
\definecolor{sortcolor}{rgb}{0.1, 0.5, 0.1}      
\definecolor{attributecolor}{rgb}{0.7, 0.1, 0.1} 
\newtheorem{prop}{Proposition}
\newcommand{\ttt}[1]{\texttt{\small #1}}
\newcommand{\PI}{\Uppi}
\newcommand{\lam}{\lambda}
\begin{document}

\pagestyle{headings}  
\title{Congruence Closure in Intensional Type Theory}
\titlerunning{Congruence Closure in Intensional Type Theory}
\author{Daniel Selsam\inst{1} \and Leonardo de Moura\inst{2}}
\authorrunning{Selsam and de Moura} 

\institute{Stanford University\\
  \email{dselsam@stanford.edu}
  \and
  Microsoft Research\\
  \email{leonardo@microsoft.com}}
\maketitle

\begin{abstract}
  Congruence closure procedures are used extensively in automated
  reasoning and are a core component of most satisfiability
  modulo theories solvers.
  However, no known congruence closure algorithms can support
  any of the expressive logics based on intensional type theory (ITT), which
  form the basis of many interactive theorem provers.
  The main source of expressiveness in these logics is dependent
  types, and yet existing congruence closure procedures found in
  interactive theorem provers based on ITT do not handle dependent
  types at all and only work on the simply-typed subsets of the
  logics.
  Here we present an efficient and proof-producing congruence closure procedure
  that applies to every function in ITT no matter how many
  dependencies exist among its arguments, and that only relies on the commonly assumed \emph{uniqueness of identity proofs} axiom.
  We demonstrate its
  usefulness by solving interesting verification problems involving
  functions with dependent types.
\end{abstract}

\section{Introduction}

Congruence closure procedures are used extensively in automated reasoning, since
almost all proofs in both program verification and formalized mathematics
require reasoning about equalities~\cite{nelson1980fast}. The algorithm
constitutes a fundamental component of most satisfiability modulo theories (SMT) solvers~\cite{barrett2011cvc4,de2008z3}; it is
often distinguished as the ``core theory solver'', and is responsible for communicating
literal assignments to the underlying SAT solver and equalities to the other
``satellite solvers''~\cite{ematching,de2008z3}. However, no known congruence
closure algorithms can support any of the expressive logics based on intensional type theory (ITT).
Yet despite the lack of an algorithm for congruence closure, the
benefits that ITTs confer in terms of expressiveness, elegance, and
trustworthiness have proved substantial enough that different flavors of ITT
form the basis of many interactive theorem provers, such as Coq~\cite{coq}, Lean~\cite{delean}, and Matita~\cite{matita}, and also several
emerging programming languages, such as Agda~\cite{agda}, Epigram~\cite{epigram}, and Idris~\cite{idris}.
Many of the most striking successes in both certified programming and formalized
mathematics have been in variants of ITT, such as the development of a fully-certified
compiler for most of the C language~\cite{leroy2009formal} and the
formalization of the odd-order theorem~\cite{gonthier2013machine}.

There are currently two main workarounds for the lack of a congruence closure
algorithm for ITT, and for the lack of robust theorem
proving tools for ITT more generally. One option is to rely much more on manual
proving. Although many impressive projects have been formalized with little to
no automation, this approach is not very attractive since the cost of manual
proving can be tremendous. We believe that as long as extensive manual proving is a central part
of writing certified software or formalizing mathematics, these will remain
niche activities for the rare expert. The other option is to
relinquish the use of dependent types whenever manual reasoning
becomes too burdensome so that more traditional automation can be
used. Note that the Coq system even has a tactic
\lstinline{congruence} that
performs congruence closure, but it does not handle dependent types at
all and only works on the simply-typed subset of the language.
This sacrifice may be appropriate in certain contexts, but losing all the benefits of
dependent types makes this an unsatisfactory solution in general.

Given the limitations of these two workarounds, it would be preferable
to perform congruence closure and other types of automated reasoning
directly in
the richer language of ITT. Unfortunately, equality and congruence are
both surprisingly subtle in ITT, and as we will see, the theorem that could justify
using the standard congruence closure procedure for functions with
dependent types is not provable in the core logic, nor does it follow from any of the axioms commonly assumed in existing systems.
In this paper, we introduce a new
notion of congruence that applies to every function in ITT no matter
how many dependencies exist among its arguments, along with a simple
and efficient extension of the standard congruence closure procedure
to fully automate reasoning about this more general notion of
congruence. Our procedure is applicable to a wide variety of projects since
it only relies on the \emph{uniqueness of identity proofs}
axiom, which is built into the logic of many systems including
Agda, Idris, and Lean, and which is commonly assumed in the
others. We hope our procedure helps make it possible for users to have the best of both worlds: to reap all the benefits of dependent types while still
enjoying all the power of traditional automation.

\section{Preliminaries\label{sec:prelim}}

We assume the term language is a dependent $\lam$-calculus in which terms are
described by the following grammar:
\begin{lstlisting}
                            $t$,$s$ ::= $x$ $\mid$ $c$ $\mid$ Type $\mid$ $t$ $s$ $\mid$ $\lam x : s, t$ $\mid$ $\PI x : s, t$
\end{lstlisting}
where $x$ is a variable and $c$ is a constant. To simplify the presentation,
we omit type universes at sort \lstinline{Type}. It is not relevant
to this paper whether the universe hierarchy is cumulative or not, nor whether
there is a distinguished sort \lstinline{Prop} (the sort of all propositions).
The term \lstinline{$\PI$x:A, B} denotes the type of functions \lstinline{f}
that map any element \lstinline{a:A} to an element of
\lstinline{B[a/x]}. When \lstinline{x} appears in \lstinline{B} we say
that \lstinline{f} is \emph{dependently-typed}; otherwise we write \lstinline{$\PI$x:A, B} as
\lstinline{A → B} to denote the usual non-dependent function
space. When \lstinline{B} is a proposition, \lstinline{$\PI$x:A, B} can be read as the universally quantified formula \lstinline{$\forall$x:A, B}, or as the logical implication \lstinline{A $\Rightarrow$ B} if \lstinline{x} does not appear in \lstinline{B}. The term \lstinline{f a} denotes a function application, and
the lambda abstraction \lstinline{$\lam$x:A, t} denotes a function that given an element \lstinline{a}
of type \lstinline{A} produces \lstinline{t[a/x]}.
As usual in Type Theory, a \emph{context} $\Gamma$ is a sequence of \emph{typing assumptions}
\lstinline{a:A} and (local) definitions \lstinline{c:A := t}, where \lstinline{t} has type \lstinline{A} and \lstinline{c} does not occur in \lstinline{t}.
We often omit the type \lstinline{A} and simply write \lstinline{c := t} to save space when no confusion arises.
Similarly, an \emph{environment} $\Delta$ is a sequence of (global)
definitions \lstinline{f:A := t}.
We use $type(\Delta, \Gamma, t)$ to denote the type of $t$ with
respect to $\Delta$ and $\Gamma$, and $type(t)$ when no confusion
arises. Given an environment $\Delta$ and a context $\Gamma$, every
term reduces to a normal form by the standard $\beta\delta\eta\iota\zeta$-reduction rules.
For this paper we will assume a
fixed environment $\Delta$ that contains all definitions and theorems
that we present.
As usual, we
write \lstinline{$\PI$(a:A)(b:B),C} as a shorthand for
\lstinline{$\PI$a:A,($\PI$b:B,C)}. We use a similar shorthand for
$\lam$-terms.

\subsection{Equality\label{sec:equality}}
One of the reasons that congruence is subtle in ITT is that
equality itself is subtle in ITT. The single notion of equality in
most other logics splits into at least three different yet related
notions in ITT.

\paragraph{Definitional equality.}
The first notion of equality in ITT is \emph{definitional equality}. We write
\lstinline{a $\equiv$ b} to mean that \lstinline{a} and \lstinline{b} are equal by
definition, which is the case whenever \lstinline{a} and \lstinline{b} reduce to
the same normal form. For example, if we define a function
\lstinline{f : $\mathbb{N}$ $\rightarrow$ $\mathbb{N}$ := $\lam$ n : $\mathbb{N}$, 0} in the
environment $\Delta$, then the terms \lstinline{0} and \lstinline{f 0} both reduce
to the same normal form \lstinline{0} and so are equal by definition.
On the other hand, \lstinline{($\lam$ n m: $\mathbb{N}$, n + m)} is not definitionally equal to \lstinline{($\lam$ n m: $\mathbb{N}$, m + n)}, since they
are both in normal form and these normal forms are not the same.
Note that definitional equality is a judgment at the meta-level, and the theory itself
cannot refer to it; in particular, it is not possible to assume or
negate a definitional equality.

\paragraph{Homogeneous propositional equality.}
The second notion of equality in ITT is \emph{homogeneous propositional
  equality}, which we will usually shorten to
\emph{homogeneous equality} since ``propositional'' is implied.
  Unlike definitional equality which is a
judgment at the meta-level, homogeneous equality can be assumed, negated, and
proved inside the logic itself. There is a constant
\lstinline[breaklines=false]{eq : $\PI$ (A : Type), A $\rightarrow$ A $\rightarrow$ Type} in $\Delta$
such that, for any type \lstinline{A} and elements \lstinline{a b : A}, the expression
\lstinline{eq A a b} represents the proposition that \lstinline{a} and \lstinline{b} are
``equal''. Note that we call this homogeneous equality because the types of
\lstinline{a} and \lstinline{b} must be definitionally equal to even
\emph{state} the proposition that \ttt{a} and \ttt{b} are equal.
We write \lstinline{a =$_\ttt{A}$ b}
as shorthand for \lstinline{eq A a b}, or \lstinline{a = b} if the type \lstinline{A} is clear
from context. We say a term \lstinline{t} of type \lstinline{a = b} is a \emph{proof} for
\lstinline{a = b}.

The meaning of homogeneous equality is given by the introduction and
elimination rules for \lstinline{eq}, which state how to prove that two elements
are equal and what one can do with such a proof respectively. The introduction rule
for \lstinline{eq} is the dependent function
\lstinline[breaklines=false]{refl : $\PI$ (A : Type) (a : A), a = a},
which says that every element of type \lstinline{A} is equal
to itself. We call \lstinline{refl} the reflexivity axiom, and write \lstinline{refl a}
whenever the type \lstinline{A} is clear from context. Note that if \lstinline{a b : A} are
definitionally equal, then \lstinline{refl a} is a proof for \lstinline{a = b}.
The elimination principle (also known as the recursor) for the type \lstinline{eq} is the dependent function \lstinline{erec}:
\begin{lstlisting}
erec : $\PI$ (A : Type) (a : A) (C : A $\rightarrow$ Type), C a $\rightarrow$ $\PI$ (b : A), a = b $\rightarrow$ C b
\end{lstlisting}
This principle states that if a property \lstinline{C} holds for an element
\lstinline{a}, and \lstinline{a = b} for some \lstinline{b}, then we can conclude that
\lstinline{C} must hold of \lstinline{b} as well. We say \lstinline{C} is the
\emph{motive}, and we write \lstinline{(erec C p e)} instead of \lstinline{(erec A a C p b e)}
since \lstinline{A}, \lstinline{a} and \lstinline{b} can be inferred easily from \lstinline{e : a = b}.
Note that by setting \lstinline{C} to be the identity function \lstinline{id : Type $\rightarrow$ Type},
\lstinline{erec} can be used to change the type of a term to an
equal type; that is, given a term \lstinline{a : A} and a proof
\lstinline{e : A = B}, the term \lstinline[breaklines=false]{(erec id a e)} has type
\lstinline{B}.
We call this a \emph{cast}, and say that we \emph{cast} \lstinline{a} to have type \lstinline{B}.
Note that it is straightforward to use \lstinline{erec} and \lstinline{refl} to
prove that \lstinline{eq} is symmetric and transitive and hence an
equivalence relation.

\paragraph{Heterogeneous propositional equality.}
As we saw above, homogeneous equality suffers from a peculiar
limitation: it is not even possible to form the proposition \lstinline{a = b} unless
the types of \lstinline{a} and \lstinline{b} are definitionally equal.
The further one strays from the familiar
confines of simple type theory, the more severe this handicap becomes. For
example, a common use of dependent types is to include the length of a list
inside its type in order to make out-of-bounds errors impossible. The
resulting type is often called a \emph{vector} and has type
\lstinline{vector : $\PI$ (A : Type), $\mathbb{N}$ → Type}.
It is easy to define an append function on vectors:
\begin{lstlisting}
app : $\PI$ (A : Type) (n m : $\mathbb{N}$), vector A n → vector A m → vector A (n + m)
\end{lstlisting}
However, we cannot even state the proposition that \lstinline{app} is associative
using homogeneous equality, since the type \lstinline{vector A (n + (m + k))} is not
definitionally equal to the type \lstinline{vector A ((n + m) + k)}, only
propositionally equal. The same issue arises when reasoning about
vectors in mathematics. For example, we cannot even state the
proposition that concatenating zero-vectors of different lengths $m$
and $n$ over the real numbers $\mathbb{R}$ is commutative, since the type
$\mathbb{R}^{m + n}$ is not definitionally equal to the type
$\mathbb{R}^{n + m}$. In both cases, we could use \lstinline{erec} to
cast one of the two terms to have the type of the other, but
this approach would quickly become unwieldy as the number of
dependencies increased, and moreover every procedure that reasoned
about equality would need to do so modulo casts.

Thus there is a need for a third notion of equality in ITT,
\emph{heterogeneous propositional equality}, which we will usually shorten to
\emph{heterogeneous equality} since ``propositional'' is implied. There is a constant
\lstinline{heq : $\PI$ (A : Type) (B : Type), A $\rightarrow$ B $\rightarrow$ Type}
that behaves like \lstinline{eq} except that its arguments may have different types.\footnote{There are many equivalent ways of defining \lstinline{heq}. One popular way is ``John Major equality''~\cite{mcbride2000elimination}. Additional formulations and formal proofs of equivalence can be found at {\scriptsize\url{http://leanprover.github.io/ijcar16/heq.lean}}.}
We write \lstinline{a == b} as shorthand for \lstinline{heq A B a b}.
Heterogeneous equality has an introduction rule
\lstinline{hrefl : $\PI$ (A : Type) (a : A), a == a} analogous to
\lstinline{refl}, and it is straightforward to show that
\lstinline{heq} is an equivalence relation by proving the
following theorems:
\begin{lstlisting}
hsymm : $\PI$ (A B : Type) (a : A) (b : B), a == b $\rightarrow$ b == a
htrans : $\PI$ (A B C : Type) (a : A) (b : B) (c : C), a == b $\rightarrow$ b == c $\rightarrow$ a == c
\end{lstlisting}
Unfortunately, the flexibility of \lstinline{heq} does not come without a cost: as we
discuss in \S\ref{sec:congruence}, \lstinline{heq} turns out to be
weaker than \lstinline{eq} in subtle ways and does not permit as
simple a notion of congruence.
\paragraph{Converting from heterogeneous equality to homogeneous equality.}
It is straightforward to convert a proof of homogeneous equality \lstinline{p : a = b} into one
of heterogeneous equality using the lemma
\begin{lstlisting}
lemma ofeq (A : Type) (a b : A) : a = b $\rightarrow$ a == b
\end{lstlisting}
However, we must assume an axiom in order to prove the reverse direction
\begin{lstlisting}
ofheq (A : Type) (a b : A) : a == b $\rightarrow$ a = b
\end{lstlisting}
The statement is equivalent to the
\emph{uniqueness of identity proofs} (UIP) principle~\cite{streicher:93}, to Streicher’s \emph{Axiom
K}~\cite{streicher:93}, and to a few other variants as well. Although these axioms
are not part of the core logic of ITT, they have been found to be consistent
with ITT by means of a meta-theoretic argument~\cite{miquel2003not}, and
are built into the logic of many systems including Agda, Idris, and Lean. They also follow from various stronger axioms that are commonly
assumed, such as \emph{proof irrelevance} and \emph{excluded
  middle}. In Coq, UIP or an axiom that implies it is often assumed when heterogeneous equality is
used, including in the CompCert project~\cite{leroy2009formal}. Our approach is built upon
being able to recover homogeneous equalities from heterogeneous
equalities between two terms of the same type and so makes heavy use
of \lstinline{ofheq}.

\section{Congruence\label{sec:congruence}}

\paragraph{Congruence over homogeneous equality.}
It is straightforward to prove the following lemma using \lstinline{erec}:
\begin{lstlisting}
  lemma congr : $\PI$ (A B : Type) (f g : A → B) (a b : A), f = g $\rightarrow$ a = b $\rightarrow$ f a = g b
\end{lstlisting}
and thus prove that \lstinline{eq} is indeed a congruence relation for simply-typed
functions. Thus the standard congruence closure algorithm can be applied to the
simply-typed subset of ITT without much complication. In particular,
we have the familiar property that \lstinline{f a} and \lstinline{g b} are in the same equivalence class if and only if either an equality \lstinline{f a = g b} has been processed, or if \lstinline{f} and \lstinline{g} are in the same equivalence and \lstinline{a} and \lstinline{b} are in the same equivalence class.

\paragraph{Congruence over heterogeneous equality.}
Unfortunately, once we introduce functions with dependent types, we must
switch to \lstinline{heq} and lose the familiar property discussed above that \lstinline{eq}
satisfies for simply-typed functions. Ideally we would like the
following congruence lemma for heterogeneous equality:

\begin{lstlisting}
hcongr_ideal : $\PI$ (A A$'$ : Type) (B : A → Type) (B$'$ : A$'$ → Type)
    (f : $\PI$ (a : A), B a) (f$'$ : $\PI$ (a$'$ : A$'$), B$'$ a$'$) (a : A) (a$'$ : A$'$),
    f == f$'$ $\rightarrow$ a == a$'$ $\rightarrow$ f a == f$'$ a$'$
\end{lstlisting}

Unfortunately, this theorem is not provable in ITT~\cite{hcongr}, even when we assume UIP. The issue is that we need to
establish that \lstinline{B = B$'$} as well, and this fact does not follow from
\lstinline{($\PI$ (a : A), B a) = ($\PI$ (a$'$ : A$'$), B$'$ a$'$)}. Assuming
\lstinline{hcongr_ideal} as an axiom is not a satisfactory solution because it
would limit the applicability of our approach, since as far as we know it is not assumed in any
existing interactive theorem provers based on ITT.

However, for any given $n$, it is straightforward to prove the following congruence lemma
using only \lstinline{erec}, \lstinline{ofheq}
and \lstinline{hrefl}\footnote{The formal statements and proofs for small values of $n$ can be found at\\{\scriptsize\url{http://leanprover.github.io/ijcar16/congr.lean}}, along with formal proofs of all other lemmas described in this paper.}:
\begin{lstlisting}
lemma hcongr$_n$
  (A₁: Type)
  (A₂: A₁ → Type)
  $\ldots$
  (A$_n$: $\PI$ a₁ $\ldots$ a$_{n-2}$, A$_{n-1}$ a₁ $\ldots$ a$_{n-2}$ → Type)
  (B: $\PI$ a₁ $\ldots$ a$_{n-1}$, A$_n$ a₁ $\ldots$ a$_{n-1}$ → Type) :
  $\PI$ (f g: $\PI$ a₁ $\ldots$ a$_{n}$, B a₁ $\ldots$ a$_{n}$), f = g →
  $\PI$ (a₁ b₁: A₁), a₁ == b₁ →
  $\PI$ (a₂: A₂ a₁) (b₂: A₂ b₁), a₂ == b₂ →
  $\ldots$
  $\PI$ (a$_n$: A$_n$ a₁ $\ldots$ a$_{n-1}$) (b$_n$ : A$_n$ b₁ $\ldots$ b$_{n-1}$), a$_n$ == b$_n$ →
  f a₁ $\ldots$ a$_n$ == g b₁ $\ldots$ b$_n$
\end{lstlisting}

The lemmas \lstinline{hcongr$_n$} are weaker than \lstinline{hcongr_ideal} because
they require the outermost functions \lstinline{f} and \lstinline{g} to have the same type.
Although we no longer have the property that \lstinline{f == g} and \lstinline{a == b}
implies \lstinline{f a == g b}, we show in the next section how to extend
the congruence closure algorithm to deal with the additional restriction imposed by \lstinline{hcongr$_n$}.

When using \lstinline{hcongr$_n$} lemmas, we omit the parameters \lstinline{A$_i$}, \lstinline{B}, \lstinline{a$_i$} and
\lstinline{b$_i$} since they can be inferred from the parameters with types
\lstinline{f = g} and \lstinline{a$_i$ == b$_i$}. Note that even if some arguments
of an $n$-ary function \lstinline{f} do not depend on all previous ones, it
is still straightforward to find parameters \lstinline{A$_i$} and \lstinline{B}
that do depend on all previous arguments and so fit the theorem, and
yet become definitionally equal to the types of the actual arguments
of \lstinline{f} once applied to the preceding arguments. We remark that we
avoid this issue in our implementation by synthesizing
custom congruence theorems for every function we encounter.

\section{Congruence Closure\label{sec:cc}}

We now have all the necessary ingredients to describe a very general congruence closure procedure
for ITT. Our procedure is based on the one proposed by Nieuwenhuis and
Oliveras~\cite{CC2005} for first-order logic, which is efficient, is proof producing,
and is used by many SMT solvers. We assume the input to our congruence closure procedure is
of the form \lstinline{$\Gamma$ $\vdash$ a == b}, where $\Gamma$ is a context and \lstinline{a == b} is the goal.
Note that a goal of the form \lstinline{a = b} can be converted into \lstinline{a == b} before we start our procedure, since
when \lstinline{a} and \lstinline{b} have the same type, any proof for \lstinline{a == b} can be converted into a proof for \lstinline{a = b} using
\lstinline{ofheq}. Similarly, any hypothesis of the form \lstinline{e: a = b} can
be replaced with \lstinline{e: a == b} using \lstinline{ofeq}.
As in abstract congruence closure~\cite{Kap97,abstractcongruence}, we introduce new variables
\lstinline{c} to name all proper subterms of every term appearing on either side of an equality,
both to simplify the presentation and to obtain the efficiency of DAG-based
implementations.\footnote{To simplify the presentation further, we ignore the possibility that any of these subterms themselves include partial applications of equality.}
For example, we encode \lstinline{f N a == f N b} using the local definitions
\lstinline{(c$_1$ := f N)} \lstinline{(c$_2$ := c$_1$ a)} \lstinline{(c$_3$ := c$_1$ b)} and
the equality \lstinline{c$_2$ == c$_3$}.
We remark that \lstinline{c$_2$ == c$_3$} is definitionally equal to \lstinline{f N a == f N b}
by $\zeta$-reduction.
Here is an example problem instance for our procedure:
\begin{lstlisting}
(N: Type) (a b: N) (f: $\PI$ A: Type, A $\rightarrow$ A) (c$_1$ := f N)
(c$_2$ := c$_1$ a) (c$_3$ := c$_1$ b) (e: a == b) $\vdash$ c$_2$ == c$_3$
\end{lstlisting}
The term \lstinline{(hcongr$_2$ (refl f) (hrefl N) e)} is a proof for the
goal \lstinline{c$_2$ == c$_3$}.

As in most congruence closure procedures, ours maintains a union-find data
structure that partitions the set of terms into a number of disjoint
subsets such that if \lstinline{a} and \lstinline{b} are in the same subset
(denoted \lstinline{a $\approx$ b}) then the procedure can generate a proof
that \lstinline{a == b}. Each subset is an \emph{equivalence class}.
The union-find data structure computes the equivalence closure of the
relation \lstinline{==} by merging the equivalence classes of \lstinline{a} and
\lstinline{b} whenever \lstinline{e: a == b} is asserted. However, the union-find
data structure alone does not know anything about congruence, and in
particular it will not automatically propagate the assertion \lstinline{a == b}
to other terms that contain \lstinline{a} or \lstinline{b};
for example, it would not merge the equivalence classes of
\lstinline{c := f a} and \lstinline{d := f b}. Thus, additional machinery
is required to find and propagate new equivalences implied by the rules of
congruence.

We say that two terms are \emph{congruent} if they
can be proved to be equivalent using a congruence rule given the
current partition of the union-find data structure.
We also say two local definitions \lstinline{c := f a} and \lstinline{d := g b} are congruent
whenever \lstinline{f a} and \lstinline{g b} are congruent.
We remark that
congruence closure algorithms can be parameterized by the structure of
the congruence rules they propagate. In our case, we use the family
of \lstinline{hcongr$_n$} lemmas as congruence rules.

We now describe our
congruence closure procedure in full, although the overall structure
is similar to the one presented in~\cite{CC2005}. The key
differences are in how we determine whether two terms are congruent, how we build formal proofs of congruence using \lstinline{hcongr$_n$}, and
what local definitions we need to visit after merging two equivalence classes
to ensure that all new congruences are detected. The basic data structures in our procedure are
\begin{itemize}
\item \emph{repr}: a mapping from variables to variables, where $repr[x]$ is
the representative for the equivalence class $x$ is in. We say variable $x$ is a \emph{representative}
if and only if $repr[x]$ is $x$.
\item $next$: a mapping from variables to variables that induces a
  circular list for each equivalence class, where $next[x]$ is the
  next element in the equivalence class $x$ is in.
\item $pr$: a mapping from variables to pairs consisting of a variable
  and a proof, where if $pr[x]$ is $(y, p)$,
then $p$ is a proof for $x == y$ or $y == x$. We use $target[x]$ to denote $pr[x].1$.
This structure implements the \emph{proof forests} described in~\cite{CC2005}.
\item $size$: a mapping from representatives to natural numbers, where
for each representative $x$, $size[x]$ is the number of elements in the equivalence class
represented by $x$.
\item $pending$: a list of local definitions and typing assumptions to be processed.
It is initialized with the context $\Gamma$.
\item $congrtable$: a set of local definitions such that given a local definition
  $E$, the function $lookup(E)$ returns a local definition in
  $congrtable$ congruent to $E$ if one exists.
\item $uselists$: a mapping from representatives to sets of
  local definitions, such that local definition $D$ is in $uselists[x]$ if
  $D$ might become congruent to another definition if the
  equivalence class of $x$ were merged with another equivalence
  class.
\end{itemize}

Our procedure maintains the following invariants for the data structures described above.
\begin{enumerate}
\item $repr[next[x]] \equiv repr[repr[x]] \equiv repr[x]$
\item If $repr[x] \equiv repr[y]$, then $next^k[x] \equiv y$ for some $k$.
\item $target^k[x] \equiv repr[x]$ for some $k$. That is, we can view $target^k[x]$ as
a ``path'' from $x$ to $repr[x]$. Moreover, the proofs in $pr$ can be used to build a
proof from $x$ to any element along this path.
\item Let $s$ be $size[repr[x]]$, then $next^s[x] \equiv x$. That is,
  $next$ does indeed induce a set of disjoint circular lists, one for
  each equivalence class.
\end{enumerate}

Whenever a new congruence proof for \lstinline{c == d} is inferred by our
procedure, we add the auxiliary local definition \lstinline{e: c == d := p} to $pending$, where
\lstinline{e} is a fresh variable, and \lstinline{p} is a proof for \lstinline{c == d}. The proof \lstinline{p} is always
an application of the lemma \lstinline{hcongr$_n$} for some $n$.
We say \lstinline{e : c == d} and \lstinline{e: c == d := p} are \emph{equality proofs} for
\lstinline{c == d}. Given an equality proof $E$, the functions $lhs(E)$ and $rhs(E)$ return
the left and right hand sides of the proved equality.
Given a local definition $E$ of the form \lstinline{c := f a}, the function
$var(E)$ returns \lstinline{c}, and $app(E)$ the pair \lstinline{(f, a)}.
We say a variable \lstinline{c} is a local definition when $\Gamma$ contains the definition \lstinline{c := f a},
and the auxiliary partial function $\mathit{def}(\ttt{c})$ returns this local definition.

\paragraph{Implementing $congrtable$.}
In order to implement the congruence closure procedure
efficiently, the congruence rules must admit a data structure
$congrtable$ that takes a local definition and quickly returns a
local definition in the table that it is congruent to if one exists.
It is easy to implement such a data structure with a Boolean procedure
\Call{congruent}{$D$, $E$} that determines if two local definitions
are congruent, along with a congruence-respecting hash function.
Although the family of \lstinline{hcongr$_n$} lemmas does
not satisfy the property that \lstinline{f a} and \lstinline{g b}
are congruent whenever \lstinline{f $\approx$ g} and
\lstinline{a $\approx$ b},
we still have a straightforward criterion for
determining whether two terms are congruent.

\begin{prop}\label{prop:congruent}
Consider the terms \lstinline{f a} and \lstinline{g b}. If \lstinline{a $\approx$ b}, then
\lstinline{f a} and \lstinline{g b} are congruent provided either:
\begin{enumerate}
\item \lstinline{f} and \lstinline{g} are homogeneously equal;
\item \lstinline{f} and \lstinline{g} are congruent.
\end{enumerate}
\end{prop}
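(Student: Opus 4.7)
The plan is to proceed by induction on the derivation of ``\lstinline{f} and \lstinline{g} are congruent,'' peeling off one application at a time until the base case (case~1) is reached, and then invoking a single \lstinline{hcongr$_n$} at the outermost level rather than chaining \lstinline{hcongr$_1$} through the recursion. The latter cannot work in general because unfolding case~2 only gives \emph{heterogeneous} equalities on the intermediate heads, whereas \lstinline{hcongr$_1$} requires a \emph{homogeneous} one.

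Concretely, I would unroll case~2 repeatedly: if \lstinline{f $\equiv$ f$'$ u} and \lstinline{g $\equiv$ g$'$ v} with \lstinline{u $\approx$ v}, recurse on the pair \lstinline{(f$'$,g$'$)}. This terminates once case~1 fires, producing a representation \lstinline{f a $\equiv$ h c$_1$ $\ldots$ c$_n$} and \lstinline{g b $\equiv$ h$'$ d$_1$ $\ldots$ d$_n$} where the innermost heads \lstinline{h} and \lstinline{h$'$} are homogeneously equal (so in particular share a type), each pair satisfies \lstinline{c$_i$ $\approx$ d$_i$}, and by definition \lstinline{c$_n$ $\equiv$ a} and \lstinline{d$_n$ $\equiv$ b}. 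For each $i$, invariant~3 of the union-find structure lets me extract a proof $q_i$ of \lstinline{c$_i$ == d$_i$} by walking the proof forest; from case~1 I likewise obtain $p$ of type \lstinline{h = h$'$}. I then instantiate \lstinline{hcongr$_n$} at a suitable type schema and apply it to $(p, q_1, \ldots, q_n)$ to obtain a proof of \lstinline{h c$_1$ $\ldots$ c$_n$ == h$'$ d$_1$ $\ldots$ d$_n$}, which is exactly (up to $\zeta$-reduction on the local definitions introduced at the start of \S\ref{sec:cc}) a proof of \lstinline{f a == g b}.

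The main obstacle is choosing the parameters \lstinline{A$_i$} and \lstinline{B} of \lstinline{hcongr$_n$} so that the dependent signature it demands matches the actual type of \lstinline{h}. Even when \lstinline{h} does not genuinely depend on every preceding argument, the paragraph preceding the proposition guarantees that one can always pick generic \lstinline{A$_i$} and \lstinline{B} that depend on all earlier arguments and yet $\beta$-reduce to the true argument types once applied, so the instantiation goes through. A related subtlety is the reading of case~1: ``homogeneously equal'' must mean equal under \lstinline{=} (forcing a shared type), not under \lstinline{==}; otherwise \lstinline{hcongr$_n$} would not apply and we would be back to the unprovable \lstinline{hcongr_ideal}. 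With these points handled the construction is direct and produces a formal proof term built entirely from \lstinline{hcongr$_n$}, \lstinline{ofheq}, and the proofs already stored in the proof forest.
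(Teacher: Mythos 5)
Your proposal is correct and takes essentially the same approach as the paper: the paper's proof also proceeds by induction on the congruence derivation, with case~2 extending an \lstinline{hcongr$_k$} instance to \lstinline{hcongr$_{k+1}$} by appending the proof of \lstinline{a == b}, which is exactly your ``unroll to the homogeneously-equal heads, then apply a single \lstinline{hcongr$_n$}'' construction read bottom-up (and it matches the \textsc{mkcongr} procedure of Figure~\ref{proc:mkpr}). Your observations about why chaining \lstinline{hcongr$_1$} fails and about choosing the parameters \lstinline{A$_i$} and \lstinline{B} are correct and are addressed in the paper's surrounding discussion rather than in the proof itself.
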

\begin{proof}
  First note that in both cases, we can generate a proof that
  \lstinline{a == b} since we have assumed that
  \lstinline{a $\approx$ b}.
  In the first case, if \lstinline{f} and \lstinline{g} are homogeneously equal, then no
      matter how many partial applications they contain, we can apply
      \lstinline{hcongr$_1$} to the proof of homogeneous equality and
      the proof that \lstinline{a == b}.
      In the second case, if \lstinline{f} and \lstinline{g} are congruent,
      it means that we can generate proofs of all the
      preconditions of \lstinline{hcongr$_k$} for some $k$, and the
      only additional precondition to \lstinline{hcongr$_{k+1}$} is a proof
      that \lstinline{a == b}, which we can generate as well.
\end{proof}

\begin{figure}
\vspace{-14pt}
\begin{algorithmic}[1]
\Procedure{congruent}{$D$, $E$}
  \State $(f, a) \gets app(D)$; $(g, b) \gets app(E)$
  \State \Return $a \approx b$ \textbf{and} \\
          \hspace{36pt} [($f \approx g$ \textbf{and} $type(f) \equiv type(g)$) \textbf{or} \label{line:congr_base} \\
          \hspace{38.5pt}  ($f$ and $g$ are local definitions \textbf{and} \Call{congruent}{$\mathit{def}(f)$, $\mathit{def}(g)$})]
\EndProcedure
\Procedure{congrhash}{$D$}
\State \textbf{given: } $h$, a hash function on terms
\State $(f, a) \gets app(D)$
\State \Return $hashcombine(h(repr[f]), h(repr[a]))$
\EndProcedure
\end{algorithmic}
\caption{Implementing $congrtable$\label{proc:congr}}
\end{figure}

Proposition~\ref{prop:congruent} suggests a simple recursive procedure
to detect when two terms are congruent, which we present in
Figure~\ref{proc:congr}. The procedure \Call{congruent}{$D$, $E$}, where $D$ and $E$ are
local definitions of the form \lstinline{c := f a} and \lstinline{d := g b}, returns \lstinline{true}
if a proof for \lstinline{c == d} can be constructed using an \lstinline{hcongr$_n$} lemma for some $n$.
Note that although the congruence lemmas \lstinline{hcongr$_n$} are
themselves $n$-ary, it is not sufficient to view the two terms being
compared for congruence as applications of $n$-ary functions. We
must compare each pair of partial applications for homogeneous
equality as well~(line~\ref{line:congr_base}), since two terms
with $n$ arguments each might be congruent using \lstinline{hcongr$_m$} for any $m$ such that $m \leq
n$. For example, \lstinline{f a1 c} and \lstinline{g b1 c} are
congruent by \lstinline{hcongr$_2$} if \lstinline{f = g} and
\lstinline{a1 == b1}, and yet are only congruent by
\lstinline{hcongr$_1$} if all we know is \lstinline{f a1 = g b1}. It
is even possible for two terms to be congruent that do
not have the same number of arguments. For example, \lstinline{f = g a}
implies that \lstinline{f b} and \lstinline{g a b} are congruent by \lstinline{hcongr$_1$}.

Proposition~\ref{prop:congruent} also suggests a simple way to
hash local definitions that respects congruence. Given a hash
function on terms, the procedure \Call{congrhash}{$D$} hashes a
local definition of the form \lstinline{c := f a} by simply combining the hashes
of the representatives of \lstinline{f} and \lstinline{a}. This hash function
respects congruence because if \lstinline{c := f a} and \lstinline{d := g b} are
congruent, it is a necessary (though not sufficient) condition that
\lstinline{f $\approx$ g} and \lstinline{a $\approx$ b}.

\begin{figure}
\begin{algorithmic}[1]
\Procedure{cc}{$\Gamma \vdash a == b$}
\State $pending \gets \Gamma$
\While{$pending$ is not empty}
   \State remove next $E$ from $pending$
   \If{$E$ is an equality proof}
      \Call{processeq}{$E$}
   \Else\
      \Call{initialize}{$E$}
   \EndIf
\EndWhile
\If{$repr[a] \equiv repr[b]$}
   \Return \Call{mkpr}{a, b}
\Else\ \textbf{fail} \EndIf
\EndProcedure
\end{algorithmic}
\caption{Congruence closure procedure\label{proc:cc}}
\end{figure}

\paragraph{The procedure.} Figure~\ref{proc:cc} contains the main procedure \Call{cc}{}.
It initializes $pending$ with the input context $\Gamma$.
Variables in typing assumptions and local definitions are processed using
 \Call{initialize}{} (Figure~\ref{proc:init}), and equality
proofs are processed using \Call{processeq}{}
(Figure~\ref{proc:proceq}).

\begin{figure}
\begin{algorithmic}[1]
\Procedure{initialize}{$E$}
  \State $c \gets var(E)$
  \State $repr[c] \gets c$; $next[c] \gets c$; $size[c] \gets 1$; $uselists[c] \gets \emptyset$
  \State $pr[c] \gets (c, $\ttt{`hrefl $c$'}$)$
  \If{$E$ is a local definition}
     \State \Call{inituselist}{$E$, $E$}
     \If{$D = lookup(E)$}
        \State $d \gets var(D)$; $e \gets$ make fresh variable
        \State add ($e : d == c$ := \Call{mkcongr}{$D$, $E$, []}) to $pending$ and $\Gamma$
     \Else\
        add $E$ to $congrtable$
     \EndIf
   \EndIf
\EndProcedure
\Procedure{inituselist}{$E$, $P$}
  \State $(f, a) \gets app(E)$
  \State add $P$ to $uselists[f]$ and $uselists[a]$
  \If{$f$ is a local definition}
     \Call{inituselist}{$\mathit{def}(f)$, $P$}
  \EndIf
\EndProcedure
\end{algorithmic}
\caption{Initialization procedure\label{proc:init}}
\end{figure}

The \Call{initialize}{$E$} procedure invokes \Call{inituselist}{$E$, $E$}
whenever $E$ is a local definition \lstinline{$c$ := $f$ $a$}. The second argument
at \Call{inituselist}{$E$, $P$} represents the \emph{parent} local definition
that must be included in the $uselists$.
We must ensure that for every local definition $D$ that could
be inspected during a call to \Call{congruent}{$E_1$,~$E_2$} for some $E_2$, we add
$var(E_1)$ to the $uselist$ of $var(D)$ when initializing $E_1$. Thus
the recursion in \Call{inituselist}{} must mirror the recursion in
\Call{congruent}{} conservatively, and always recurse whenever
\Call{congruent}{} might recurse. For example, assume the input context $\Gamma$ contains
\begin{lstlisting}
(A: Type) (a b d: A) (g : A $\rightarrow$ A $\rightarrow$ A) (f : A $\rightarrow$ A) (c$_1$ := g a) (c$_2$ := c$_1$ b) (c$_3$ := f d).
\end{lstlisting}
When \Call{initialize}{\ttt{c$_2$ := c$_1$ b}} is invoked, \lstinline{c$_2$ := c$_1$ d} is added to the $uselists$ of
\lstinline{c$_1$}, \lstinline{b}, \lstinline{g} and \lstinline{a}.
By a slight abuse of notation, we write \lstinline{`hrefl $a$'} to represent in the pseudocode
the expression that creates the \lstinline{hrefl}-application
using as argument the term stored in the program variable $a$.

The procedure \Call{processeq}{} is used to process equality proofs
\lstinline{a == b}.
If \mbox{\lstinline{a}} and \lstinline{b} are already in the same
equivalence class, it does nothing. Otherwise, it first removes every element in $uselists[repr[a]]$ from $congrtable$ (procedure \Call{removeuses}{}). Then, it merges the
equivalence classes of $a$ and $b$ so that
for every $a'$ in the equivalence class of $a$, $repr[a']$ is set to $repr[b]$.
This operation can be implemented efficiently using the $next$ data structure.
As in~\cite{CC2005}, the procedure also reorients the path from $a$ to $repr[a]$
induced by $pr$ (procedure \Call{flipproofs}{}) to make sure invariant 3 is still satisfied
and \emph{locally irredundant transitivity proofs}~\cite{de2005justifying} can be generated.
It then reinserts the elements removed by \Call{removeuses}{} into $congrtable$ (procedure \Call{reinsertuses}{}); if any are found to be congruent to an existing term in a different partition,
it proves equivalence using the congruence lemma \lstinline{hcongr$_n$}
(procedure \Call{mkcongr}{}) and puts the new proof
onto the queue. Finally, \Call{processeq}{} updates $next$, $uselists$ and $size$ data structures.

\begin{figure}
\begin{algorithmic}[1]
\Procedure{processeq}{$E$}
   \State $a \gets lhs(E);\ b \gets rhs(E)$
   \If{$repr[a] \equiv repr[b]$}
      \Return
   \EndIf
   \If{$size(repr[a]) > size(repr[b])$}
      \textbf{swap}$(a, b)$
   \EndIf
   \State $r_a \gets repr[a]$; $r_b \gets repr[b]$
   \State \Call{removeuses}{$r_a$}; \Call{flipproofs}{$a$}
   \ForAll{$a'$ s.t. $repr[a'] \equiv r_a$} $repr[a'] \gets r_b$ \EndFor
   \State $pr[a] \gets (b, E)$
   \State \Call{reinsertuses}{$r_a$}
   \State \textbf{swap}$(next[r_a], next[r_b])$
   \State \textbf{move} $uselists[r_a]$ \textbf{to} $uselists[r_b]$; $size[r_b] \gets size[r_b] + size[r_a]$
\EndProcedure
\Procedure{flipproofs}{$a$}
   \If{$repr[a] \equiv a$} \Return \EndIf
   \State $(b, p) \gets pr[a]$; \Call{flipproofs}{$b$}; $pr[b] \gets (a, p)$
\EndProcedure
\Procedure{removeuses}{$a$}
  \ForAll{$E$ in $uselists[a]$} remove $E$ from $congrtable$ \EndFor
\EndProcedure
\Procedure{reinsertuses}{$a$}
  \ForAll{$E$ in $uselists[a]$}
  \If{$D = lookup(E)$}
     \State $d \gets var(D)$; $e \gets var(E)$; $p \gets$ make fresh variable
     \State add ($p : d == e$ := \Call{mkcongr}{$D$, $E$, []}) to $pending$ and $\Gamma$
  \Else\ add $E$ to $congrtable$
  \EndIf
  \EndFor
\EndProcedure
\end{algorithmic}
\caption{Process equality procedure\label{proc:proceq}}
\end{figure}

Figure~\ref{proc:mkpr} contains a simple recursive procedure \Call{mkcongr}{} to construct the proof that two congruent local definitions are
equal.
The procedure takes as input two local definitions $D$ and $E$ of the form \lstinline{c := f a} and \lstinline{d := g b}
such that \Call{congruent}{$D$, $E$}, along with a possibly empty list of equality proofs $es$ for
\lstinline{a$_1$ == b$_1$}, \ldots, \lstinline{a$_n$ == b$_n$}, and returns a proof for
\lstinline{f a a$_1$ ... a$_n$ == g b b$_1$ ... b$_n$}.
The two cases in the \Call{mkcongr}{} procedure mirror the two cases of the \Call{congruent}{} procedure.
If the types of \lstinline{f} and \lstinline{g} are definitionally equal we construct an instance of the lemma \lstinline{hcongr$_{|es|+1}$}.
The procedure \Call{mkpr}{\lstinline{a}, \lstinline{b}} (Figure~\ref{proc:mkpr}) creates a proof for \lstinline{a == b} if \lstinline{a} and \lstinline{b} are in the same equivalence class
by finding the common element $target^n[\ttt{a}] \equiv target^m[\ttt{b}]$ in the ``paths'' from \lstinline{a} and \lstinline{b} to the equivalence class representative.
Note that, if \Call{congruent}{$D$, $E$} is true, then \Call{mkcongr}{$D$, $E$, []} is a proof for \lstinline{c == d}.

\begin{figure}
\begin{algorithmic}[1]
\Procedure{mkcongr}{$D$, $E$, $es$}
  \State \textbf{assumption:} \Call{congruent}{$D$, $E$}
  \State $(f, a) \gets app(D)$; $(g, b) \gets app(E)$; $e_{ab} \gets$ \Call{mkpr}{$a$, $b$}
  \If{$type(f) \equiv type(g)$}
    \State $n \gets len(es)$; $e_{fg} \gets$ \Call{mkpr}{$f$, $g$}
    \State \Return \lstinline{`hcongr$_{n+1}$ (ofheq $e_{fg}$) $e_{ab}$ $es$'}
  \Else\
    \Return \Call{mkcongr}{$\mathit{def}(f)$, $\mathit{def}(g)$, [$es$, $e_{ab}$]}
  \EndIf
\EndProcedure
\Procedure{mkpr}{$a$, $b$}
  \If{$a \equiv b$}
     \Return \lstinline{`hrefl $a$'}
  \EndIf
  \State \textbf{let} $n$ and $m$ be the smallest values s.t. $target^n[a] \equiv target^m[b]$
  \State $e_a \gets$ \Call{mktrans}{$a$, $n$}; $e_b \gets$ \Call{mktrans}{$b$, $m$};
  \Return \lstinline{`htrans $e_a$ (hsymm $e_b$)'}
\EndProcedure
\Procedure{mktrans}{$a$, $n$}
  \If{$n = 0$}
     \Return \lstinline{`hrefl $a$'}
  \EndIf
  \State $(b, e_{ab}) \gets pr[a]$;  $e \gets$ \Call{mktrans}{$b$, $n-1$}
  \If{$lhs(e_{ab}) \equiv a$ \textbf{and} $rhs(e_{ab}) \equiv b$}
      \Return \lstinline{`htrans $e_{ab}$ $e$'}
  \Else\
      \Return \lstinline{`htrans (hsymm $e_{ab}$) $e$'}
  \EndIf
\EndProcedure
\end{algorithmic}
\caption{Transitive proof generation procedure\label{proc:mkpr}}
\end{figure}

Finally, we remark that the main loop of \Call{cc}{} maintains the following two invariants.
\begin{theorem}
If $a$ and $b$ are in the same equivalence class \textup{(}i.e., $a \approx b$\textup{)}, then \Call{mkpr}{}\textup{(}$a$, $b$\textup{)} returns a correct proof that \lstinline{$a$ == $b$}.
\end{theorem}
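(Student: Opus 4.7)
The plan is to prove this by reducing correctness of \Call{mkpr}{} to a supporting lemma about \Call{mktrans}{}, and to appeal to invariant 3, which states that $target^k[x] \equiv repr[x]$ for some $k$ and that the proofs stored in $pr$ witness equalities between consecutive elements of this path.

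First I would establish the helper lemma that \Call{mktrans}{$a$, $n$} returns a correct proof of \lstinline{$a$ == $target^n[a]$}, by induction on $n$. The base case $n = 0$ is immediate since \lstinline{hrefl $a$} proves \lstinline{$a$ == $a$}. For the inductive step, let $(b, e_{ab}) \gets pr[a]$; by invariant 3, $e_{ab}$ is a proof of either \lstinline{$a$ == $b$} or \lstinline{$b$ == $a$}. The procedure disambiguates by inspecting $lhs(e_{ab})$ and $rhs(e_{ab})$, applying \lstinline{hsymm} when necessary, and then composes with the inductive proof of \lstinline{$b$ == $target^{n-1}[b]$} via \lstinline{htrans}, noting that $target^{n-1}[b] \equiv target^{n}[a]$ by definition.

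Next I would prove the theorem itself. If $a \equiv b$, the procedure returns \lstinline{hrefl $a$}, which is trivially correct. Otherwise, since $a \approx b$ means $repr[a] \equiv repr[b]$, invariant 3 applied to both $a$ and $b$ guarantees that $target^i[a]$ and $target^j[b]$ eventually reach the common representative, so the smallest $n, m$ with $target^n[a] \equiv target^m[b]$ exist and are well-defined. By the helper lemma, $e_a$ proves \lstinline{$a$ == $target^n[a]$} and $e_b$ proves \lstinline{$b$ == $target^m[b]$}, and since these endpoints are definitionally equal, \lstinline{htrans $e_a$ (hsymm $e_b$)} is well-typed and yields a proof of \lstinline{$a$ == $b$}.

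The main obstacle is not the proof of the theorem itself but the justification of invariant 3 across all updates: \Call{initialize}{} installs \lstinline{hrefl $c$} (trivially valid), while \Call{processeq}{} first calls \Call{flipproofs}{} to reverse the existing path (which preserves validity precisely because \Call{mktrans}{} is written to tolerate either orientation of each stored proof) and then sets $pr[a] \gets (b, E)$ where $E$ is the equality proof being processed. Closing this loop requires a companion theorem stating that every equality proof added to $pending$ by \Call{reinsertuses}{} or \Call{initialize}{} via \Call{mkcongr}{} is itself valid; that companion result follows from Proposition~\ref{prop:congruent} together with an inductive appeal to correctness of \Call{mkpr}{} on the strictly smaller subterms passed to \Call{mkcongr}{}, so the two correctness statements must in fact be proved by a simultaneous induction on the order in which entries are added to $pending$.
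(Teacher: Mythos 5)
The paper states this theorem without proof: it appears only as a remark that the main loop of \textsc{cc} ``maintains the following two invariants,'' so there is no written argument to compare yours against. Your reconstruction is correct and supplies exactly what the paper leaves implicit. The decomposition into a lemma about \textsc{mktrans} (proving \lstinline{$a$ == $target^n[a]$} by induction on $n$, using the fact that the stored proof may witness either orientation) followed by the gluing step \lstinline{htrans $e_a$ (hsymm $e_b$)} at the common ancestor $target^n[a] \equiv target^m[b]$ is the natural proof, and it matches the structure of the pseudocode in Figure~\ref{proc:mkpr}. You are also right to flag that the real content lies in preserving invariant~3 and the validity of the entries in $pr$: \textsc{flipproofs} only relocates existing proofs (tolerable because \textsc{mktrans} checks $lhs$/$rhs$ and inserts \lstinline{hsymm} as needed), while the new edge $pr[a] \gets (b, E)$ and the congruence proofs emitted by \textsc{mkcongr} are valid only if \textsc{mkpr} was already correct on earlier terms --- hence the simultaneous induction on the order in which entries enter $pending$, with Proposition~\ref{prop:congruent} guaranteeing that each \lstinline{hcongr$_n$} instance built by \textsc{mkcongr} is well-formed. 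This is a sound and, if anything, more careful treatment than the paper's.
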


\begin{theorem}
If $type(f) \equiv type(g)$, $f \approx g$, $a_1 \approx b_1$, \ldots $a_n \approx b_n$, $c \equiv f\ a_1 \ldots a_n$ and $d \equiv g\ b_1 \ldots b_n$, then
$c \approx d$.
\end{theorem}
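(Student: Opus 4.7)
The plan is to prove this by induction on $n$, using the preprocessing invariant that every partial application $f\ a_1 \ldots a_k$ (respectively $g\ b_1 \ldots b_k$) is named by a local definition variable $c_k$ (respectively $d_k$) in $\Gamma$, so that $c_n \equiv c$, $d_n \equiv d$, $c_0 \equiv f$, $d_0 \equiv g$, and $c_k := c_{k-1}\ a_k$ as a local definition.

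The base case $n = 0$ is immediate since $c \equiv f$, $d \equiv g$, and $f \approx g$ is given. For the inductive step, I would first apply the induction hypothesis to conclude $c_{n-1} \approx d_{n-1}$ (the typing hypothesis on $f$ and $g$ is unchanged). Then I would establish the key intermediate claim, by a secondary induction on $k$ running from $1$ to $n$: \Call{congruent}{$c_k$, $d_k$} returns \lstinline{true}. For $k = 1$ this is direct, since $c_0 \equiv f$, $d_0 \equiv g$, the hypotheses give $f \approx g$ and $type(f) \equiv type(g)$, and $a_1 \approx b_1$; so the first disjunct in \Call{congruent}{} fires. For $k > 1$, the hypothesis $a_k \approx b_k$ handles the first conjunct; for the disjunction, if $type(c_{k-1}) \equiv type(d_{k-1})$ the first branch succeeds (using the main IH to get $c_{k-1} \approx d_{k-1}$), and otherwise $c_{k-1}$ and $d_{k-1}$ are both local definitions, and the secondary IH on $k-1$ gives \Call{congruent}{$\mathit{def}(c_{k-1})$, $\mathit{def}(d_{k-1})$}, so the second branch succeeds.

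It remains to conclude from \Call{congruent}{$c_n$, $d_n$} being \lstinline{true} that $c \approx d$, which follows from the main invariant maintained by the procedure: whenever two local definitions in $\Gamma$ become congruent, the algorithm eventually queues and processes an equality proof between their heads. I would argue this by appealing to the mirror structure between \Call{congruent}{} and \Call{inituselist}{}: every sub-local-definition that \Call{congruent}{} can inspect has its parent added to the $uselists$ of each of its constituents, so any merge that could change the outcome of \Call{congruent}{$c_n$, $d_n$} (a merge of equivalence classes containing $f, g, a_1, \ldots, a_n, b_1, \ldots, b_n$, or any intermediate $c_i, d_i$) triggers a \Call{reinsertuses}{} call that performs a $lookup$ in $congrtable$ and enqueues the corresponding congruence proof built by \Call{mkcongr}{}.

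The main obstacle is this last step: cleanly establishing the global invariant that the algorithmic machinery never misses a congruence. One must verify that the recursion in \Call{inituselist}{} is a conservative over-approximation of the recursion in \Call{congruent}{} (so that no sub-local-definition whose representative matters to a congruence check is omitted from any relevant uselist), and that \Call{reinsertuses}{} is invoked on every change to a representative of an element reachable by \Call{congruent}{}'s recursion. Once this structural correspondence is made precise, the conclusion $c \approx d$ follows because the $lookup$ in $congrtable$ after the final relevant merge returns a definition congruent to $c_n$ (namely $d_n$, up to representative), enqueuing the equality processed by \Call{processeq}{}.
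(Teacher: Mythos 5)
The paper states this theorem without proof --- it appears only under the remark that the main loop ``maintains the following two invariants'' --- so there is no official argument to compare against; judged on its own, your strategy is sound and tracks the design rationale the paper gives in prose. Your inductions correctly reduce the claim to (i) \textsc{congruent}$(\mathit{def}(c_n),\mathit{def}(d_n))$ holding under the final partition, and (ii) the global invariant that the data structures never miss a congruence. On (i), note that the secondary induction alone already suffices: the chain of second disjuncts bottoms out at $k=1$, where the first disjunct fires because $type(f)\equiv type(g)$ and $f\approx g$ are hypotheses; so the outer induction on $n$ and its appeal to $c_{n-1}\approx d_{n-1}$ is redundant (though harmless). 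On (ii), your justification via the mirroring of \textsc{inituselist} and \textsc{congruent} is exactly the property the paper itself emphasizes when describing \textsc{inituselist}, and you are right that this is where all the work lies. What remains only sketched is the temporal content of (ii): the hypotheses $f\approx g$, $a_i\approx b_i$ are established at different moments of the run, so the invariant must be phrased over loop iterations --- for instance, every local definition of $\Gamma$ is either in $congrtable$ or already proved equivalent to one that is, and every merge that can change a \textsc{congruent} verdict reinserts exactly the affected definitions --- and one must also handle the possibility that $lookup(\mathit{def}(c_n))$ returns some third definition $D'$ rather than $\mathit{def}(d_n)$, in which case closing the argument requires $var(D')\approx d_n$, obtained by applying the same invariant to $\mathit{def}(d_n)$. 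These are completable details rather than errors; your identification of the \textsc{inituselist}/\textsc{congruent} correspondence as the crux is correct and is the best available reconstruction of the proof the authors omitted.
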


\paragraph{Extensions.}
There are many standard extensions to the congruence closure procedure
that are straightforward to support in our framework, such as
tracking disequalities to find contradictions and propagating
injectivity and disjointness for inductive datatype
constructors~\cite{fewconstructions}. Here we present a simple
extension for propagating equalities among elements of \emph{subsingleton} types that is especially
important when proving theorems in ITT.
We say a type \lstinline{A:Type} is a subsingleton if it has at
most one element; that is, if for all
\lstinline{(a b:A)}, we have that \lstinline{a = b}. Subsingletons are used extensively in practice, and are especially ubiquitous when \emph{proof irrelevance} is assumed, in which case every proposition is a subsingleton.

One common use of dependent types is to extend functions to take
extra arguments that represent proofs that certain preconditions
hold. For example, the logarithm function only makes sense for
positive real numbers, and we can make it impossible to even call
it on a non-positive number by requiring a proof of positivity as
a second argument: \lstinline{safe_log : $\PI$ x:$\mathbb{R}$, x $>$ 0 $\rightarrow$ $\mathbb{R}$}.
The second argument is a proposition and hence is a subsingleton when
we assume \emph{proof irrelevance}.
Consider the following goal:
\lstinline{(a b : $\mathbb{R}$) (Ha : a $>$ 0) (Hb : b $>$ 0) (e : a = b) $\vdash$ safe_log a Ha = safe_log b Hb}.
The core procedure we presented above would not be able to prove this
theorem on its own because it would never discover that \lstinline{Ha == Hb}.
We show how to extend the procedure to automatically propagate facts of this kind.

We assume we have an oracle $issub(\Gamma, A)$ that returns true for subsingleton types
for which we have a proof \lstinline{sse$_A$} of \lstinline{$\PI$a b:A, a = b}.
Many proof assistants implement an efficient (and incomplete) $issub$ using
\emph{type classes}~\cite{casteran2014gentle,elab2015}, but it is beyond the scope of this paper to describe this mechanism.
Given a subsingleton type \lstinline{A} with proof \lstinline{sse$_A$}, we can prove
\begin{lstlisting}
  hsse$_A$: $\PI$ (C:Type) (c:C) (a:A), C == A $\rightarrow$ c == a,
\end{lstlisting}
which we can use as an additional propagation rule in the congruence closure procedure.
The idea is to merge the equivalence classes of \lstinline{a:A} and \lstinline{c:C}
whenever \lstinline{A} is a subsingleton and \lstinline{C $\approx$ A}.
First, we add a mapping $subrep$ from subsingleton types to their representatives.
Then, we include the following additional code in \Call{initialize}{}:
\begin{algorithmic}
  \State $C  \gets type(c)$; $A \gets repr[C]$
  \If{$issub(\Gamma, A)$}
  \If{$a = subrep[A]$}
  \State $p \gets $ \Call{mkpr}{$C$, $A$}; $e \gets$ make fresh variable
  \State add ($e : c == a$ := \lstinline{hsse$_A$ $C$ $p$ $c$ $a$}) to $pending$ and $\Gamma$
  \Else\ $subrep[A] \gets c$
  \EndIf
  \EndIf
\end{algorithmic}
Finally, at \Call{processeq}{} whenever we merge the equivalence classes of subsingleton
types $A$ and $C$, we also propagate the equality \lstinline{$subrep[A]$ == $subrep[C]$}.

With this extension, our procedure can prove \lstinline{safe_log a Ha = safe_log b Hb} in the example
above, since the terms \lstinline{a $>$ 0} and \lstinline{b $>$ 0} are both
subsingleton types with representative elements \ttt{Ha} and \ttt{Hb}
respectively, and when their equivalence classes are merged, the
subsingleton extension propagates the fact that their representative
elements are equal, i.e. that \lstinline{Ha == Hb}.

\section{Applications\label{sec:applications}}

We have implemented our congruence closure procedure for Lean\footnote{\scriptsize{\url{https://github.com/leanprover/lean/blob/master/src/library/blast/congruence_closure.cpp}}}
along with many of the standard extensions as part of a long-term effort to build a robust theorem prover for
ITT. Although congruence closure can be useful on its own, its power
is greatly enhanced when it is combined with a procedure for
automatically instantiating lemmas so that the user does not need to
manually collect all the ground facts that the congruence closure
procedure will need. We use an approach called
\emph{e-matching}~\cite{ematching} to instantiate lemmas that makes
use of the equivalences represented by the state of the congruence
closure procedure when deciding what to instantiate, though the details of
e-matching are beyond the scope of this paper. The combination of
congruence closure and e-matching is already very powerful, as we
demonstrate in the following two examples, the first from software
verification and the second from formal mathematics. The complete list
of examples we have used to test our procedure can be found at \url{http://leanprover.github.io/ijcar16/examples}.

\paragraph{Vectors (indexed lists).}
As we mentioned in \S\ref{sec:equality}, a common use of dependent
types is to include the length of a list inside its type in order to
make out-of-bounds errors impossible.
The constructors of \lstinline{vector} mirror those of \lstinline{list}:
\begin{lstlisting}
nil : $\PI$ {A : Type}, vector A 0
cons : $\PI$ {A : Type} {n : ℕ}, A $\rightarrow$ vector A n $\rightarrow$ vector A (succ n)
\end{lstlisting}
where \lstinline{succ} is the successor function on natural numbers, and where curly braces indicate that a parameter should be inferred from context. We use the notation \lstinline{[x]} to denote the one-element
\lstinline{vector} containing only \lstinline{x}, i.e. \lstinline{cons x nil}, and
\lstinline{x::v} to denote \lstinline{cons x v}.
It is easy to define append and reverse on \lstinline{vector}:
\begin{lstlisting}
app : $\PI$ {A : Type} {n₁ n₂ : $\mathbb{N}$}, vector A n₁ $\rightarrow$ vector A n₂ $\rightarrow$ vector A (n₁ + n₂)
rev : $\PI$ {n : $\mathbb{N}$}, vector A n $\rightarrow$ vector A n
\end{lstlisting}
When trying to prove the basic property \lstinline{rev (app v₁ v₂) == app (rev v₂) (rev v₁)} about these two functions,
we reach the following goal:
\begin{lstlisting}
(A : Type) (n₁ n₂ : $\mathbb{N}$) (x₁ x₂ : A) (v₁ : vector A n₁) (v₂ : vector A n₂)
(IH : rev (app v₁ (x₂::v₂)) == app (rev (x₂::v₂)) (rev v₁))
$\vdash$ rev (app (x₁::v₁) (x₂::v₂)) == app (rev (x₂::v₂)) (rev (x₁::v₁))
\end{lstlisting}
Given basic lemmas about how to push \ttt{app} and \ttt{rev} in over \ttt{cons},
a lemma stating the associativity of \ttt{app}, and a few
basic lemmas about natural numbers, our congruence closure procedure
together with the e-matcher can solve this goal. Once the e-matcher
establishes the following ground facts:
\begin{lstlisting}
H₁ : rev (x₁::v₁) == app (rev v₁) [x₁]
H₂ : app (x₁::v₁) (x₂::v₂) == x₁::(app v₁ (x₂::v₂))
H₃ : rev (x₁::(app v₁ (x₂::v₂))) == app (rev (app v₁ (x₂::v₂))) [x₁]
H₄ : app (app (rev (x₂::v₂)) (rev v₁)) [x₁] == app (rev (x₂::v₂)) (app (rev v₁) [x₁])
\end{lstlisting}
as well as a few basic facts about the natural numbers, the result
follows by congruence.

\paragraph{Safe arithmetic.}
As we mentioned in \S\ref{sec:cc}, another common use of dependent types
is to extend functions to take extra arguments that represent proofs
that certain preconditions hold. For example, we can define safe
versions of the logarithm function and the inverse function as follows:
\begin{lstlisting}
        safe_log : $\PI$ (x : $\mathbb{R}$), x $>$ 0 $\rightarrow$ $\mathbb{R}$          safe_inv : $\PI$ (x : $\mathbb{R}$), x $\ne$ 0 $\rightarrow$ $\mathbb{R}$
\end{lstlisting}
Although it would be prohibitively cumbersome to prove the
preconditions manually at every invocation, we can relegate this task
to the theorem prover, so that \lstinline{log x} means \lstinline{safe_log x p}
and \lstinline{y$^{-1}$} means \lstinline{safe_inv y q}, where \lstinline{p} and \lstinline{q}
are proved automatically. Given basic lemmas about arithmetic
identities, our congruence closure procedure together with the
e-matcher can solve many complex equational goals like the following,
despite the presence of embedded proofs:
\begin{lstlisting}
$\forall$ (x y z w : $\mathbb{R}$), x > 0 $\to$ y > 0 $\to$ z > 0 $\to$ w > 0 $\to$ x * y = exp z + w $\to$
  log (2 * w * exp z + w$^2$ + exp (2 * z)) / -2 = log y$^{-1}$ - log x
\end{lstlisting}

\section{Related Work\label{sec:related_work}}

Corbineau~\cite{corbineau2001autour} presents a
congruence closure procedure for the simply-typed subset of ITT and a
corresponding implementation for Coq as the tactic \lstinline{congruence}. The procedure uses
homogeneous equality and does not support dependent types at
all. Hur~\cite{hur2010heq} presents a library of
tactics for reasoning over a different variant of heterogeneous
equality in Coq, for which the user must manually separate the parts of the type
that are allowed to vary between heterogeneously equal terms from those that
must remain the same. The main tactic provided is \lstinline{Hrewritec}, which tries
to rewrite with a heterogeneous equality by converting it to a cast-equality,
rewriting with that, and then generalizing the proof that the types are
equal. There does not seem to be any general notion of congruence akin to our
family of \lstinline{hcongr$_n$} lemmas.

Sj\"oberg and Weirich~\cite{sjoberg2014programming} propose using
congruence closure during type checking for a new dependent type
theory in which definitional equality is determined by the congruence
closure relation instead of by the standard forms of reduction.  They
avoid the problem that we solve in this paper by designing their type
theory so that \lstinline{hcongr_ideal} is provable. However, since
their type theory is not compatible with any of the standard flavors
of ITT such as the calculus of inductive constructions, their
congruence closure procedure cannot be used to prove theorems in
systems such as Coq and Lean.

\section{Conclusion}

We have presented a very general notion of congruence for ITT based on
heterogeneous equality that applies to all dependently typed
functions. We also presented a congruence closure procedure that can
propagate the associated congruence rules efficiently and so
automatically prove a large and important set of goals.
Just as congruence closure procedures (along with DPLL) form the foundation of modern SMT
solvers, we hope that our congruence closure procedure can form the foundation
of a robust theorem prover for intensional type theory. We are building
such a theorem prover for Lean, and it can already solve many interesting problems.

\paragraph{Acknowledgments.}
We would like to thank David Dill, Jeremy Avigad, Robert Lewis, Nikhil Swany, Floris van Doorn and Georges Gonthier for
providing valuable feedback on early drafts.

\bibliographystyle{splncs03}
\bibliography{congr}

\end{document}